\documentclass[%
 reprint,
superscriptaddress,
nofootinbib,
 amsmath,amssymb,
 aps,
]{revtex4-2}

\usepackage{graphicx}% Include figure files
\usepackage{dcolumn}% Align table columns on decimal point
\usepackage{bm}% bold math
\usepackage{braket}
\usepackage{amsthm}
\usepackage{xcolor}
\usepackage[colorlinks=true,
            linkcolor=blue,
            citecolor=blue,
            urlcolor=blue,
            breaklinks=true]{hyperref}% hypertext capabilities
\newtheorem{theorem}{Theorem}
\newtheorem{lemma}{Lemma}

\def\bea{\begin{eqnarray}}
\def\eea{\end{eqnarray}}
\def\<{\langle}
\def\>{\rangle}
\def\be{\begin{equation}}
\def\ee{\end{equation}}
\newcommand{\Tr}{\operatorname{Tr}}
\newcommand{\HA}{\mathcal{H}_A}
\newcommand{\HB}{\mathcal{H}_B}
\newcommand{\ketbra}[2]{|#1\rangle\langle #2|}

\begin{document}

\preprint{APS/123-QED}

\title{Universal Entanglement Dynamics of Unitary Operators}

\author{Ian Low}
 \affiliation{%
Department of Physics and Astronomy, Northwestern University, Evanston, IL60208, USA}%
\affiliation{%
High Energy Physics Division, Argonne National Laboratory, Argonne, IL 60439, U.S.A.}%

\author{Navin McGinnis}
\affiliation{%
High Energy Physics Division, Argonne National Laboratory, Argonne, IL 60439, U.S.A.}%
 \affiliation{Enrico Fermi Institute, Physics Department, University of Chicago, Chicago, IL 60637, USA}
\affiliation{%
 Department of Physics, University of Arizona, Tucson, Arizona 85721, USA}%

\date{\today}% It is always \today, today,
             %  but any date may be explicitly specified

\begin{abstract}

The entangling power of a unitary operator acting on a bipartite Hilbert
space measures the entanglement it generates from product states,
averaged over the inputs. A finite-dimensional unitary has a spectral
decomposition $U=\sum_{a=1}^{n}e^{i\theta_a}P_a$, where $e^{i\theta_a}$ are the eigenvalues, $P_a$ the corresponding eigen-projectors, and $n$ is the
number of distinct eigenvalues. After removing an overall phase, the
entangling power is  a function on the $(n-1)$-torus of
relative eigenphases at fixed spectral projectors.  We prove that this function is stationary at all
$2^{n-1}$ points on the torus where every relative phase is $0$ or
$\pi$, which we define as \textit{corners}. Up to an overall phase,  $U$ at each corner  is a
generalized reflection $R=\mathbb{I}-2Q$ satisfying $R^2=\mathbb{I}$,
where $Q$ is the sum of spectral projectors whose relative phase is $\pi$. At the corner the  entangling
power  is expressed in terms of seven local-unitary invariants of $Q$. A unitary gate $U$ can be realized as a
corner  of some projector family if and only if
$U^2\propto\mathbb{I}$, a condition satisfied by many Clifford and non-Clifford  gates.  We illustrate the theorem with two-qubit gates, $SU(N)$ channel decompositions, and two-site spin chains, obtaining examples of minima, maxima, and saddle points. In addition, a corner that is a saddle point on the full phase torus can appear as a local maximum or minimum along different time-evolution trajectories.
\end{abstract}

\maketitle

%\tableofcontents

\section{\label{sec:intro} Introduction}

The entangling power of a unitary operator quantifies the entanglement it generates from
product inputs, averaged over those inputs~\cite{Zanardi:2000zz,Zanardi:2001zza}. It is a
property of the operator rather than of any particular state, and it presupposes no
knowledge of the symmetries of the system. Applied to a time-evolution operator and
averaged over time, it characterizes the typical entanglement production of a
dynamics~\cite{Lu:2008apn,Pal:2018ptv,Manna:2024jvb}.

A growing body of work has found that the entanglement generated in particle scattering is
extremized at points of enhanced symmetry. For instance, it has been showne that the nucleon--nucleon $S$-matrix is minimally
entangling at the Wigner $SU(4)$ symmetric point~\cite{Beane:2018oxh,Low:2021ufv,Liu:2022grf},
with similar correlations in low-energy hadron
dynamics~\cite{Liu:2023bnr,Hu:2025lua,Low:2026evp}, extensions of the Higgs
sector and the Standard Model~\cite{Carena:2023vjc,Carena:2025wyh,Li:2026kha,Kanemura:2026won}, and in the general relativistic $S$-matrix framework~\cite{Cervera-Lierta:2017tdt,McGinnis:2025brt,McGinnis:2025iab,McGinnis:2025xgt}. These
observations have motivated the broader possibility that symmetry is tied to an
entanglement-extremization principle, and that the extremization of quantum resources may
help organize fundamental parameters, from the flavor structure~\cite{Thaler:2024anb} to the Higgs mass~\cite{Liu:2025iwh}. A closely related line of work replaces entanglement by
nonstabilizerness, or ``magic,'' and asks whether that resource organizes physical
structure as
well~\cite{Liu:2025bgw,Cao:2026aye,Liu:2025qfl,Liu:2025frx,Li:2026udy,Busoni:2025dns,Gargalionis:2025iqs,Li:2026zum,Chang:2026nzq}.
The same correlation has been seen in other contexts, in the time-averaged entangling
power of anisotropic Heisenberg spin chains~\cite{Low:2026oyf} and the quantum Rabi
models~\cite{Low:2026kxb}.

In these studies the extremum of the entangling power is always computed on a case-by-case basis: the entangling power
is computed for a particular system, and one then reads off the  stationary points, either
numerically or by inspection of an explicit closed form. What is missing is a general statement about where such extrema exist.

In this work we supply a theorem showing that a discrete set of such stationary points 
occurs at universal values of the relative eigenphases, independently of the eigen-projectors and of the interaction that generates the phases. More specifically, for a unitary operator  $U$ on a finite-dimensional space with $n$ distinct eigenvalues, one can write the following spectral decomposition,
\be
U=\sum_{a=1}^{n} e^{i\theta_{a}}P_{a} = e^{i\theta_n}\left(\sum_{a=1}^{n-1} e^{i\delta_a} P_a + P_n\right)\ ,
\label{eq:spectral}
\ee
where $e^{i\theta_a}$ is the eigenvalue, $P_a$  the corresponding eigen-projector,  and $\delta_a\equiv \theta_a -\theta_n,$  the relative eigenphase. Without loss of generality we have factored out the overall phase $e^{i\theta_n}$ in the above. The entangling
power is then a function of the eigen-projectors $P_{a}$ and of the $(n-1)$-torus of relative eigenphases $\delta_{a}$. The theorem states that the entangling power is stationary
at the $2^{n-1}$ points of that torus where $\delta_{a}\in \{0,\pi\}$, which we call the \textit{corners}. The
statement holds for any projectors, any subsystem dimensions, and any choice of bipartition.

At a corner the gate takes the form of a generalized reflection $R=\mathbb{I}-2Q$, $R^2=\mathbb{I}$, where $Q$
is the sum of the projectors whose relative phase is $\pi$.
We obtain its entangling power in closed form,
and for the case of two projectors, $n=2$, the resulting expression solves the landscape completely and supplies a
criterion for whether the nontrivial corner is a maximum or a minimum. Many standard quantum gates arise as corners of two-projector families: not only the Clifford gates ${\rm CNOT}$, ${\rm SWAP}$ and ${\rm CZ}$, but also the non-Clifford Toffoli and controlled-controlled-$Z$ gates. The corners are  universal in terms of the relative eigenphases, whatever the projectors: two unitary operators with the same $\delta_a$ but different $P_a$ have stationary entangling power at the same places, $\delta_a\in\{0,\pi\}$.

It is worth emphasizing that the theorem specifies the location of the 
stationary points with respect to the relative eigenphases but does not classify them; whether a corner is a minimum, a maximum or
a saddle point requires computing the second derivative,  the Hessian matrix. In Sec.~\ref{sec:apps} we exhibit all three possibilities. The corners are also not the only stationary points; the others are not covered by the theorem.

This paper is organized as follows. In Sec.~\ref{sec:setup} we review
the entangling-power formalism and establish the invariances used
below. In Sec.~\ref{sec:thm}, we provide the statement and proof of the theorem described above, derive the
entangling power at an arbitrary corner, solve the general two-projector
entanglement landscape, and classify the standard corner gates. In
Sec.~\ref{sec:apps} we apply the results to two-qubit dynamics,
two-projector channels with an $SU(N)$ symmetry, and the two-site spin-$1$
bilinear--biquadratic model. We summarize the
scope of the theorem and discuss open questions in
Sec.~\ref{sec:con}.

%%%%%%%%%%%%%%%%%%%%%%%%%%%%%%%%%%%%%%%%%%%%%%%%%%%%%%%%%%%%%%%%%%%%%%%%
\section{\label{sec:setup} Entangling Power}
%%%%%%%%%%%%%%%%%%%%%%%%%%%%%%%%%%%%%%%%%%%%%%
Consider unitary operators acting on a bipartite Hilbert space $\HA\otimes\HB$, with
$\HA\cong\mathbb{C}^{d_{A}}$ and $\HB\cong\mathbb{C}^{d_{B}}$ finite dimensional. We will use Latin
indices $i,j$ to label an orthonormal basis of $\HA$ and Greek indices $\alpha,\beta$ for one of
$\HB$. We measure the entanglement of a state in $\HA\otimes\HB$ by the linear entropy
$E(\ket{\psi})=1-\Tr\rho_{A}^{2}$, with $\rho_{A}=\Tr_{B}\ketbra{\psi}{\psi}$, and define
the \textit{entangling power} of a unitary operator $U$ as the average of $E$ over its action on product
inputs~\cite{Zanardi:2000zz,Zanardi:2001zza},
\begin{equation}
  \mathcal{E}_{p}(U)\equiv\int d\mu_{A}\,d\mu_{B}\;
  E\big(U\ket{\phi_{A}}\otimes\ket{\phi_{B}}\big),
  \label{eq:EP_def}
\end{equation}
where $d\mu_{A}$ and $d\mu_{B}$ are Haar measures on $\HA$ and $\HB$, respectively.\footnote{It is possible to adopt a different sampling procedure from the Haar sampling. See Ref.~\cite{Low:2026kxb}.} Notice that $\mathcal{E}_{p}$ is invariant under local unitaries,
$\mathcal{E}_{p}(u_{1}\otimes u_{2}\cdot U\cdot u_{3}\otimes u_{4})=\mathcal{E}_{p}(U)$,
a property used throughout this work.

The average in Eq.~\eqref{eq:EP_def} can be computed in closed form. To see this we consider two
rearrangements of the matrix elements $U_{(i\alpha),(j\beta)}$~\cite{Lu:2008apn}. One is the partial
transpose on $\HA$,
\begin{equation}
  \big(U^{T_A}\big)_{(i\alpha),(j\beta)} \equiv U_{(j\alpha),(i\beta)},
  \label{eq:PTdef}
\end{equation}
which remains a $d_{A}d_{B}\times d_{A}d_{B}$ matrix, while the other is the realignment which instead groups the
two row indices against the two column indices,
\begin{equation}
  \big(U^{R}\big)_{(ij),(\alpha\beta)} \equiv U_{(i\alpha),(j\beta)},
  \label{eq:Rdef}
\end{equation}
and is a $d_{A}^{2}\times d_{B}^{2}$ matrix. Neither is unitary, even for unitary $U$.
From them we form the two invariants
\begin{equation}
  I_R(U) \equiv \Tr\!\big[(U^{R}U^{R\dagger})^2\big],
  \quad
  I_T(U) \equiv \Tr\!\big[(U^{T_A}U^{T_A\dagger})^2\big],
  \label{eq:invariants}
\end{equation}
in terms of which the entangling power is given by~\cite{Ma:2007tny,Lu:2008apn}
\begin{equation}
  \mathcal{E}_p(U) \;=\; \frac{d_A d_B+1}{(d_A+1)(d_B+1)}
             \;-\; \frac{I_R(U)+I_T(U)}{d_A(d_A+1)\,d_B(d_B+1)} .
    \label{eq:master}
\end{equation}
Equation~\eqref{eq:master} is the main computational tool we need. Its derivation, which
rests on the two-copy Haar average and on writing the purity as a SWAP on the doubled
Hilbert space, is given in Refs.~\cite{Ma:2007tny,Lu:2008apn}.
% we refer the reader to
%Refs.~\cite{Low:2026oyf,Low:2026kxb} for a detailed account in the present notation, noting that those works normalize the entangling power by an additional factor $d_{A}/(d_{A}-1)$. 
%Two special cases fix the conventions. For $U=\mathbb{I}$ one finds
%$I_{R}=d_{A}^{2}d_{B}^{2}$ and $I_{T}=d_{A}d_{B}$, and for $d_{A}=d_{B}$ and $U=S_{W}$
%the SWAP, $S_{W}\ket{x}\otimes\ket{v}=\ket{v}\otimes\ket{x}$, the two values are
%exchanged; in both cases $\mathcal{E}_{p}=0$, as it must be for operators that map every
%product state to a product state.

The theorem of Sec.~\ref{sec:thm} requires one further property: the entangling power
is the same for a unitary and its Hermitian conjugate. The
corresponding statement for the operator entanglement was given by Zanardi~\cite{Zanardi:2001zza}, and the two-qubit case is recorded in Refs.~\cite{VatanWilliams2004,Rezakhani2004}. Here we state it in arbitrary
dimensions, with a short proof, since it is the only ingredient the theorem requires.
\begin{lemma}\label{lem:dagger}
For every unitary $U$ acting on $\HA\otimes\HB$ and every phase $\phi$,
\begin{equation}
  \mathcal{E}_p\!\big(e^{i\phi}U^{\dagger}\big) = \mathcal{E}_p(U).
  \label{eq:daggerlemma}
\end{equation}
\end{lemma}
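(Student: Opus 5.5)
Since $|e^{i\phi}|=1$, the phase $e^{i\phi}$ drops out of $E$ for every input, and hence $\mathcal{E}_p(e^{i\phi}U^\dagger)=\mathcal{E}_p(U^\dagger)$. The phase can also be absorbed into a local unitary such as $e^{i\phi}\mathbb{I}_A\otimes\mathbb{I}_B$. It therefore suffices to prove $\mathcal{E}_p(U^\dagger)=\mathcal{E}_p(U)$. By the master formula, Eq.~\eqref{eq:master}, this reduces to showing that the two invariants of Eq.~\eqref{eq:invariants} are unchanged under $U\to U^\dagger$:
\[
I_R(U^\dagger)=I_R(U),\qquad I_T(U^\dagger)=I_T(U).
\]

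The plan is to write $U^\dagger=\overline{U}^{\,T}$, the entrywise complex conjugate of the full transpose, and to handle the two operations separately.

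\emph{Complex conjugation.} Both rearrangements commute with entrywise conjugation: $(\overline U)^R=\overline{U^R}$ and $(\overline U)^{T_A}=\overline{U^{T_A}}$. Moreover $\Tr[(\overline{M}\,\overline{M}^\dagger)^2]=\overline{\Tr[(MM^\dagger)^2]}$, and this quantity is real. Hence both invariants are unchanged under conjugation.

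\emph{Full transpose.} We have $(U^T)_{(i\alpha),(j\beta)}=U_{(j\beta),(i\alpha)}$.
\begin{itemize}
\item For the realignment, $(U^T)^R_{(ij),(\alpha\beta)}=U^R_{(ji),(\beta\alpha)}$, so $(U^T)^R=P_A\,U^R\,P_B$. Here $P_A$ and $P_B$ are the permutation (swap) matrices on $\HA\otimes\HA$ and $\HB\otimes\HB$ that exchange the two indices. These are unitary, so $(U^T)^R(U^T)^{R\dagger}=P_A U^RU^{R\dagger}P_A^\dagger$, and the trace of its square equals $I_R(U)$.
\item For the partial transpose, $(U^T)^{T_A}_{(i\alpha),(j\beta)}=(U^T)_{(j\alpha),(i\beta)}=U_{(i\beta),(j\alpha)}$, which is precisely $U^{T_B}$. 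Since $U^{T_B}=(U^{T_A})^T$, we get $(U^T)^{T_A}=(U^{T_A})^T$.
\end{itemize}

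The second item is the main point requiring care. For any matrix $M=U^{T_A}$, write $M^T(M^T)^\dagger=M^T\overline{M}=\overline{(M^\dagger M)}^{\,T}$. Then
\[
\Tr[(M^T\overline M)^2]=\Tr[(M^\dagger M)^2]=\Tr[(MM^\dagger)^2],
\]
using transpose invariance of the trace, reality of the result, and cyclicity. Hence $I_T(U^T)=I_T(U)$.

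Combining the conjugation and transpose steps gives $I_{R,T}(U^\dagger)=I_{R,T}(U)$, and Eq.~\eqref{eq:master} yields the lemma.

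The only obstacle is the index bookkeeping for how the full transpose interacts with $T_A$ and $R$. I would verify it by writing out indices explicitly, as above. As a sanity check, $U=S_W$ (the SWAP) satisfies $U^\dagger=U$, so both sides trivially agree.
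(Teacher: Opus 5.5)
Your proof is correct. The one real difference from the paper is how you handle $I_R$. You also have a small slip in an intermediate formula, noted at the end.

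For $I_T$ the two proofs agree in substance. The paper uses $(U^\dagger)^{T_A}=(U^{T_A})^\dagger$ together with cyclicity of the trace. Your conjugation step and transpose step compose to exactly this identity: $(U^\dagger)^{T_A}=((\overline U)^{T_A})^T=(\overline{U^{T_A}})^T=(U^{T_A})^\dagger$. So the step you flagged as ``the main point requiring care'' reduces to one line.

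For $I_R$ the routes differ. The paper argues conceptually. The singular values of $U^R$ are the operator-Schmidt coefficients of $U=\sum_k s_k A_k\otimes B_k$. Since $U^\dagger=\sum_k s_k A_k^\dagger\otimes B_k^\dagger$ is again such a decomposition, $I_R=\sum_k s_k^4$ is unchanged. Your two steps instead give, in effect, the explicit matrix identity $(U^\dagger)^R=P_A\,\overline{U^R}\,P_B$, where $P_A$ and $P_B$ are unitary swap permutations. This settles the claim by index bookkeeping alone. It does not need the Schmidt--SVD correspondence, and both sides have the same $d_A^2\times d_B^2$ shape, which is the issue the paper raises.

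What each route buys:
\begin{itemize}
\item Your route gives slightly more: $I_R$ and $I_T$, hence $\mathcal{E}_p$, are separately invariant under $U\to\overline U$ and under $U\to U^T$.
\item The paper's route identifies $I_R$ as the operator-Schmidt purity, which ties the lemma to Zanardi's operator entanglement.
\end{itemize}

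The slip: the correct identity is $M^T\overline M=(M^\dagger M)^T=\overline{M^\dagger M}$. You wrote $\overline{(M^\dagger M)}^{\,T}$, which equals $M^\dagger M$, and $M^T\overline M\neq M^\dagger M$ in general. Your conclusion $\Tr[(M^T\overline M)^2]=\Tr[(M^\dagger M)^2]$ is still correct.
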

\begin{proof}
It is immediate from Eq.~(\ref{eq:invariants}) that the overall phase cancels. For the
adjoint, Eq.~(\ref{eq:PTdef}) gives $(U^{\dagger})^{T_{A}}=(U^{T_{A}})^{\dagger}$, so that
$I_{T}(U^{\dagger})=I_{T}(U)$ by the cyclic property of the trace in
Eq.~\eqref{eq:invariants}. For the realignment the adjoint cannot enter in the same way,
since $(U^{R})^{\dagger}$ is not even of the same shape as $(U^\dagger)^R$ unless $d_{A}=d_{B}$; instead
recall that the singular values of $U^{R}$ are the operator-Schmidt
coefficients of $U$. Writing the operator-Schmidt decomposition
$U=\sum_{k}s_{k}\,A_{k}\otimes B_{k}$, with $\{A_{k}\}$ and $\{B_{k}\}$ orthonormal in the
Hilbert-Schmidt inner product and $s_{k}\geq0$, one has $I_{R}(U)=\sum_{k}s_{k}^{4}$. Taking the
adjoint gives $U^{\dagger}=\sum_{k}s_{k}\,A_{k}^{\dagger}\otimes B_{k}^{\dagger}$, and the sets
$\{A_{k}^{\dagger}\}$ and $\{B_{k}^{\dagger}\}$ are again orthonormal. Since any such expansion with non-negative coefficients is a singular-value decomposition of the corresponding realignment, the two decompositions have the same singular values, and $I_R(U^{\dagger})=I_R(U)$.
The Lemma follows.
\end{proof}

A second invariance is needed below. For $d_A=d_B$ the entangling power is also unchanged
by composition with $S_W$,
$\mathcal{E}_p(S_W U)=\mathcal{E}_p(U S_W)=\mathcal{E}_p(U)$, a property established
already in Ref.~\cite{Zanardi:2000zz}. In the variables used here it is the statement
that composing with $S_W$ exchanges the two invariants, $I_R(U S_W)=I_T(U)$ and
$I_T(U S_W)=I_R(U)$, so that their sum, the only combination entering
Eq.~\eqref{eq:master}, is unaffected. %see also Refs.~\cite{Jonnadula2017,Jonnadula2020},
%where the antisymmetric combination $I_R-I_T$ is used to define the complementary
%quantity known as gate typicality.
%%%%%%%%%%%%%%%%%%%%%%%%%%%%%%%%%%
%%%%%%%%%%%%%%%%%%%%%%%%%%%%%%%%%%%%%%%%%%%%%%%%%%%%%%%%%%%%%%%%%%%%%%%%
\section{\label{sec:thm} A Theorem on Stationary Points}
For a unitary $U$ acting on
$\mathcal{H}=\mathcal{H}_A\otimes\mathcal{H}_B$, we consider its form as written in Eq.~\eqref{eq:spectral} in
terms of its eigenphases $\theta_{a}$ and eigen-projectors $P_{a}$, which satisfy
$P_{a}P_{b}=\delta_{ab}P_{a}$ and $\sum_{a}P_{a}=\mathbb{I}$.
In finite dimensions every unitary admits that form.

% so it is no restriction on $U$. Fixing the resolution $\{P_{a}\}$ and letting the eigenphases run organizes a \textit{family} of unitaries, as for a partial-wave $S$-matrix as a function of energy or a calibrated phase gate as a function of its controls. As the theorem below shows, that choice is bookkeeping rather than hypothesis: it fixes which operators the family contains, but the stationarity it establishes constrains perturbations that need not respect the resolution at all.

Recall from Lemma~\ref{lem:dagger} that we may remove an overall phase. Thus, we define the operator
\begin{equation}
       V(\bm{\delta}) = e^{-i\theta_{n}}U = P_{n}+\sum_{a=1}^{n-1}e^{i\delta_{a}}P_{a},\label{eq:specV}
\end{equation}
where $\delta_{a} = \theta_{a} - \theta_{n}$ are the relative eigenphases, so that $\delta_{n}=0$ and the independent ones are $\delta_{1},\ldots,\delta_{n-1}$, and so $\mathcal{E}_{p}(U) = \mathcal{E}_{p}(V)$. In the following, we introduce the shorthand notation $\mathcal{E}_{p}(\bm{\delta})\equiv\mathcal{E}_{p}(V(\bm{\delta}))$. With this setup, we are ready to prove the following result:
\begin{theorem}\label{thm:main}
For an arbitrary unitary $U$ on $\mathcal{H}=\HA\otimes\HB$ with $V(\bm{\delta})$ defined in Eq.~(\ref{eq:specV}):
\textit{(i)} $\mathcal{E}_{p}(\bm{\delta})=\mathcal{E}_{p}(-\bm{\delta})$ for all
$\bm{\delta}$; and \textit{(ii)} $\nabla_{\bm{\delta}}\,\mathcal{E}_{p}(\bm{\delta}_{\bm{m}})=0$
at every corner $\bm{\delta}_{\bm{m}}$, whose components are all 0 or $\pi$ modulo
$2\pi$. There are $2^{n-1}$ such corners.
\end{theorem}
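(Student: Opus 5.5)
Part (i) follows from Lemma~\ref{lem:dagger}. Since the $P_a$ are Hermitian projectors, $V(\bm{\delta})^{\dagger}=P_n+\sum_{a=1}^{n-1}e^{-i\delta_a}P_a=V(-\bm{\delta})$. Lemma~\ref{lem:dagger} with $\phi=0$ then gives
\begin{equation*}
\mathcal{E}_p(-\bm{\delta})=\mathcal{E}_p\big(V(\bm{\delta})^{\dagger}\big)=\mathcal{E}_p(\bm{\delta}).
\end{equation*}
No property of the projectors beyond Hermiticity and orthogonality is used.

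For (ii) I will first establish smoothness and periodicity. By Eq.~\eqref{eq:master}, $\mathcal{E}_p$ is an affine function of $I_R+I_T$. The maps $U\mapsto U^R$ and $U\mapsto U^{T_A}$ are linear. Hence $I_R(V(\bm{\delta}))$ and $I_T(V(\bm{\delta}))$ are polynomials in the entries of $V$ and $\overline{V}$, and so trigonometric polynomials in the $\delta_a$. Therefore $\mathcal{E}_p(\bm{\delta})$ is smooth and $2\pi$-periodic in each $\delta_a$, being a function on the torus $T^{n-1}$.

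At a corner $\bm{\delta}_{\bm m}$ with each component in $\{0,\pi\}$, we have $-\bm{\delta}_{\bm m}\equiv\bm{\delta}_{\bm m}\pmod{2\pi}$, since $-\pi\equiv\pi$. So the reflection $\bm{\delta}\mapsto-\bm{\delta}$ on the torus fixes the corner. Writing $\bm{\delta}=\bm{\delta}_{\bm m}+\bm{\epsilon}$, periodicity and (i) give
\begin{equation*}
\mathcal{E}_p(\bm{\delta}_{\bm m}+\bm{\epsilon})=\mathcal{E}_p(-\bm{\delta}_{\bm m}-\bm{\epsilon})=\mathcal{E}_p(\bm{\delta}_{\bm m}-\bm{\epsilon}).
\end{equation*}
Thus $\bm{\epsilon}\mapsto\mathcal{E}_p(\bm{\delta}_{\bm m}+\bm{\epsilon})$ is an even smooth function. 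Differentiating at $\bm{\epsilon}=0$ gives $\nabla\mathcal{E}_p=-\nabla\mathcal{E}_p$, so the gradient vanishes.

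For the count, each of the $n-1$ independent relative phases takes one of two values modulo $2\pi$, giving $2^{n-1}$ corners. Because the $P_a$ are distinct spectral projectors and the phases are treated as free coordinates, distinct $\bm{m}$ give distinct points of the torus, even where the corresponding operators happen to coincide.

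The only potential obstacle is the periodicity step. Although the overall phase was removed, one must check that $V$ depends on $\delta_a$ only through $e^{i\delta_a}$, so that the corner is genuinely fixed by negation on the torus. Equation~\eqref{eq:specV} shows this directly.
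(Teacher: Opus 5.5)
Your proposal is correct and follows the paper's proof essentially step for step. Part (i) comes from $V(\bm{\delta})^{\dagger}=V(-\bm{\delta})$ together with Lemma~\ref{lem:dagger}, and part (ii) uses that the corners are fixed points of $\bm{\delta}\mapsto-\bm{\delta}$ on the torus, so $\mathcal{E}_p$ is even about each corner and its gradient vanishes. Your remark that $\mathcal{E}_p$ is a trigonometric polynomial in the $\delta_a$, hence smooth and $2\pi$-periodic, adds a useful detail: it makes explicit the differentiability that the paper's Taylor-expansion argument assumes without comment.
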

\begin{proof}
For {(i)}, note that
$V(\bm{\delta})^{\dagger}=P_{n}+\sum_{a=1}^{n-1}e^{-i\delta_{a}}P_{a}=V(-\bm{\delta})$, so
that Lemma~\ref{lem:dagger} gives
$\mathcal{E}_{p}(\bm{\delta})=\mathcal{E}_{p}(-\bm{\delta})$ immediately.

For {(ii)}, since the eigenphases are defined modulo $2\pi$, the corners are precisely
the fixed points of inversion on the torus,
$-(\bm{\delta}_{\bm{m}})_{a}\equiv(\bm{\delta}_{\bm{m}})_{a}\ (\mathrm{mod}\ 2\pi)$. Combined
with {(i)}, a displacement $\bm{\epsilon}$ about $\bm{\delta}_{\bm{m}}$ then obeys
$\mathcal{E}_{p}(\bm{\delta}_{\bm{m}}+\bm{\epsilon})
=\mathcal{E}_{p}(\bm{\delta}_{\bm{m}}-\bm{\epsilon})$. The linear term of the Taylor
expansion of $\mathcal{E}_{p}$ about $\bm{\delta}_{\bm{m}}$ is odd under
$\bm{\epsilon}\to-\bm{\epsilon}$ and must therefore vanish. Counting is immediate: each of
the $n-1$ independent components takes one of two values, giving $2^{n-1}$ corners.
\end{proof}

We introduce $\bm{m}=(m_1,\ldots, m_{n-1})$ with $m_{a}\in\{0,1\}$, taken modulo 2, to label the $2^{n-1}$ corners: $\bm{\delta}_{\bm{m}} = \pi \bm{m}$. At each corner $\bm{m}$ one can define a unitary operator -- the corner gate --  as
\begin{equation}
  R_{\bm{m}}\equiv V(\bm{\delta}_{\bm{m}})=P_{n}+\sum_{a=1}^{n-1}(-1)^{m_{a}}P_{a}=\mathbb{I}-2Q_{\bm{m}},
  \label{eq:reflection}
\end{equation}
where $Q_{\bm{m}}=\sum_{a:\,m_{a}=1}P_{a}$ is a projector. Hence every corner gate is a Hermitian involution, or generalized reflection: $R_{\bm{m}}^{\dagger}=R_{\bm{m}}$ and $R_{\bm{m}}^{2}=\mathbb{I}$. Since the $P_{a}$ are orthogonal and the components of $\bm{m}$ are taken modulo 2, one sees $R_{\bm{m}}R_{\bm{m}'}=R_{\bm{m}+\bm{m}'}$ and the corner gates form the finite group $\mathbb{Z}_{2}^{\,n-1}$, which is the direct product of $(n-1)$ copies of $\mathbb{Z}_2=\{0,1\}$. Restoring the overall phase, $U_{\bm{m}}\equiv e^{i\theta_{n}}R_{\bm{m}}$ satisfies $U_{\bm{m}}^{2}=e^{2i\theta_{n}}\mathbb{I}$: the physical gate is an involution in projective unitary space. The reflection $R_{\bm{m}}$ has eigenvalues $\pm 1$ with eigen-projectors $\Pi_{+}=\mathbb{I}-Q_{\bm{m}}$ and $\Pi_{-}=Q_{\bm{m}}$, since $R_{\bm{m}}=(\mathbb{I}-Q_{\bm{m}})-Q_{\bm{m}}$. 
%We call Theorem~\ref{thm:main} together with Eq.~\eqref{eq:reflection} the universal stationary-reflection theorem.

%Two cautions. The count $2^{n-1}$ bounds the corners, not the inequivalent gates; for two qubits all eight corners have vanishing entangling power. And a reflection need not be entangling. What matters is how the reflected subspace sits relative to the factorization, not its rank: the rank-one \textit{product} projector $\ketbra{11}{11}$ gives the maximally entangling controlled-$Z$, while the rank-one \textit{maximally entangled} projector gives the non-entangling SWAP.

Theorem~\ref{thm:main} locates the guaranteed stationary points but does not characterize them as minima, maxima or saddle points. Nor does it exhaust the stationary set: additional extrema generically occur outside of the corners. One can, however, compute the value of $\mathcal{E}_{p}$ at every corner, which we derive next. %Finally, since the entangling power is invariant under Hermitian conjugation and overall phase of the operator $U(\theta_{1},\theta_{2},\ldots,\theta_{n})$, the theorem also holds under the transformation $\theta_{a}\to\varphi-\theta_{a}$ for any fixed $\varphi$, in which case the corners are located at $\theta_a\in\{\varphi/2,\,\varphi/2+\pi\}$.

%Finally, let us remark that the choice of tensor product decomposition, $\mathcal{H}=\HA\otimes\HB$, underlying the calculation of the entangling power is arbitrary. This subtlety of Theorem~\ref{thm:main} is highly nontrivial as even the definition of entangled states actually depends on the choice of tensor product decomposition~\cite{Zanardi:2004zz}, and a given state in $\mathcal{H}$ can be entangled or separable depending on this choice. Nevertheless, Theorem~\ref{thm:main} holds for \textit{any} choice of tensor product decomposition.

%%%%%%%%%%%%%%%%%%%%%%%%%%%%%%%%%%%%%%%%%%%%%%%%%%%%%%%%%%%%%%%%%%%%%%%%
%\subsection{\label{sec:cornervalue}The entangling power at a corner}
%%%%%%%%%%%%%%%%%%%%%%%%%%%%%%%%%%%%%%%%%%%%%%%%%%%%%%%%%%%%%%%%%%%%%%%%

Every corner is a reflection, so its entangling power depends on the single projector $Q$ of
Eq.~\eqref{eq:reflection}. Although $\mathcal{E}_{p}$ is invariant under the  two-sided
local unitary group, $V\to WVW'$ with $W=u_{A}\otimes u_{B}$ and $W'=u'_{A}\otimes u'_{B}$,  on $V=\mathbb{I}-2Q$ we have:
\begin{equation}
  W\,(\mathbb{I}-2Q)\,W' = \big(\mathbb{I}-2\,WQW^{\dagger}\big)\,WW' .
  \label{eq:twosided}
\end{equation}
Since $WW'$ is again a local unitary which does not modify the entangling power, $\mathcal{E}_{p}(\mathbb{I}-2Q)$ must be  a function of the invariants of $Q$ under
$Q\to WQW^{\dagger}$. Since the realignment and the partial transpose are {linear} operations,
$I_{R}$ and $I_{T}$ are quartic in the operator and so of degree at most four in $Q$; the entangling power is then 
 is a polynomial of degree four in those invariants.

Consider first the invariants generated by $I_R$. Recall that the definition in Eq.~(\ref{eq:Rdef}) that $U^R$ maps $\HB\otimes\HB$ to $\HA\otimes\HA$.
%its row index is a pair of $A$ labels and its column index a pair of $B$ labels, 
So $V^{R}$ is a
$d_{A}^{2}\times d_{B}^{2}$ matrix rather than an operator on $\mathcal{H}$. This motivates introducing
% Identifying $\HA\otimes\HA$ with the operators on $\HA$, a vector with components $x_{(ij)}$ is the matrix $x$ and inner products are Hilbert--Schmidt products. In that language
$\ket{\delta_{A}}=\sum_{i}\ket{ii}$ and $\ket{\delta_{B}}=\sum_{\alpha}\ket{\alpha\alpha}$ in $\HA\otimes\HA$ and $\HB\otimes\HB$, respectively, such that %are the identity operators $\mathbb{I}_{A}$ and $\mathbb{I}_{B}$, with
$\braket{\delta_{A}|\delta_{A}}=d_{A}$ and $\braket{\delta_{B}|\delta_{B}}=d_{B}$, and
$\mathbb{I}^{R}=\ket{\delta_{A}}\bra{\delta_{B}}$.
Then $V^{R}=\ket{\delta_{A}}\bra{\delta_{B}}-2Q^{R}$. $I_R$ then depends on the following traces
\bea
 \!\!\!\!\!\!\! \bra{\delta_{A}}Q^{R}\ket{\delta_{B}}&=&r, 
\qquad \bra{\delta_{B}}Q^{R\dagger}Q^{R}\ket{\delta_{B}}=\mu_{A},\notag\\
 \!\!\!\! \!\! \! \bra{\delta_{A}}Q^{R}Q^{R\dagger}\ket{\delta_{A}}&=&\mu_{B},
 \quad\bra{\delta_{A}}Q^{R}Q^{R\dagger}Q^{R}\ket{\delta_{B}}=\kappa,
\eea
%with $v_{(ij)}=(\rho_{A})_{ij}$ and $w_{(\alpha\beta)}=(\rho_{B})_{\beta\alpha}$, so $\braket{\delta_{A}|v}=r$, $\braket{v|v}=\mu_{A}$ and $\braket{w|w}=\mu_{B}$. The third identity is what produces $\kappa$. 
which can be expressed as the invariants of $Q$ as 
\begin{align}
  &r=\Tr Q,\qquad \mu_{A}=\Tr\rho_{A}^{2},\qquad \mu_{B}=\Tr\rho_{B}^{2},\notag\\
  &\kappa=\Tr\big[Q\,(\rho_{A}\otimes\rho_{B})\big],
  \label{eq:invA}
\end{align}
where we have defined the partial traces $\rho_{A}=\Tr_{B}Q$ and $\rho_{B}=\Tr_{A}Q$.
Expanding $\Tr[(V^{R}V^{R\dagger})^{2}]$ gives
\begin{flalign}\nonumber
    I_R(\mathbb{I}-2Q) = d_{A}^{2}&d_{B}^{2}-8d_{A}d_{B}r + 8r^2 + 8d_{A}\mu_{A}\\
    &+8d_{B}\mu_{B} + 16I_{R}(Q)-32\kappa
\end{flalign}

For $I_T$, it is easy to see that the final expression depends simply  on the trace powers $q_{n}=\Tr\big[(Q^{T_{A}})^{n}\big]$, for $n=1-4$. Expanding $\Tr[(V^{T_{A}}V^{T_{A}\dagger})^{2}]$ we obtain
\begin{flalign}
    I_{T}(\mathbb{I}-2Q) = d_{A}d_{B} + 16(r + q_{4} -2q_{3})
\end{flalign}

The four of Eq.~\eqref{eq:invA} are all of the form $\Tr[Q\,X]$, with $X$ running over $\{\mathbb{I}_{A},\rho_{A}\}\otimes\{\mathbb{I}_{B},\rho_{B}\}$ and giving
$r$, $\mu_{A}$, $\mu_{B}$ and $\kappa$, respectively. Their invariance follows from the fact that tracing out one
factor absorbs any unitary acting on it, so that $\rho_{A}\to u_{A}\rho_{A}u_{A}^{\dagger}$
and $\rho_{B}\to u_{B}\rho_{B}u_{B}^{\dagger}$. The partial-transpose moments follow  from
\be
[(u_{A}\otimes u_{B})Q(u_{A}\otimes u_{B})^{\dagger}]^{T_{A}}
=(u_{A}^{*}\otimes u_{B})Q^{T_{A}}(u_{A}^{*}\otimes u_{B})^{\dagger}\ ,
\ee
again a local
conjugation.  Collect the invariants into
\begin{align}
  \alpha &\equiv d_{A}d_{B}\,r+r^{2}-d_{A}\mu_{A}-d_{B}\mu_{B},\label{eq:alpha}\\
  \beta &\equiv r^{2}+r+I_{R}(Q)+I_{T}(Q)-2\kappa-2q_{3}.\label{eq:beta}
\end{align}
We then find, for any $n$, any projector family, and any subsytem dimensions $d_{A}$ and $d_{B}$,
\begin{equation}
  \mathcal{E}_{p}(\mathbb{I}-2Q)=\frac{8}{D}\big(\alpha-2\beta\big),
  \label{eq:cornervalue}
\end{equation}
where $D\equiv d_{A}d_{B}(d_{A}+1)(d_{B}+1)$. %This is the $x=2$ case of Eq.~\eqref{eq:twochan} below. A reflection therefore sees the reflected subspace only through its rank, its two marginal purities, its overlap with the product of the marginals, its operator-Schmidt purity, and two partial-transpose moments.

It is interesting to observe that the objects $\alpha,\beta$ carry more symmetry than the invariants in Eqs.~(\ref{eq:invA}) and the $q_n$.
Replacing $Q$ by the complementary projector $\mathbb{I}-Q$ sends $R_{\bm{m}}$ to $-R_{\bm{m}}$. This operation preserves the entangling power but changes all seven invariants: $\{r, \mu_A, \mu_B, \kappa, I_R(Q), q_3, q_4\}$. Nevertheless,  $\alpha$
and $\beta$ remain unchanged. 

%The reason is that $R_{\bm{m}}$ and $-R_{\bm{m}}$ are the same gate up to a phase: in the two-channel family of Sec.~\ref{sec:twochannel} the replacement sends $V(\delta)\to e^{i\delta}V(-\delta)$, so by invariance under an overall phase together with Theorem~\ref{thm:main}\textit{(i)} the entire landscape is unchanged, which forces the two coefficients to agree term by term. It is accordingly $\alpha$ and $\beta$, not the seven ingredients, that are the invariants attached to a corner.

%%%%%%%%%%%%%%%%%%%%%%%%%%%%%%%%%%%%%%%%%%%%%%%%%%%%%%%%%%%%%%%%%%%%%%%%
\subsection{\label{sec:twochannel}Exact solution for two-channel dynamics}
%%%%%%%%%%%%%%%%%%%%%%%%%%%%%%%%%%%%%%%%%%%%%%%%%%%%%%%%%%%%%%%%%%%%%%%%
For $n=2$ the entangling power only depends on one relative phase $\delta$ and $\mathcal{E}_{p}(\delta)$ can be computed exactly on the entire circle, $\delta\in [0,2\pi)$, not just at the corner $\delta_{\bm{m}} = \{0,\pi\}$.
 Let $Q$ be any orthogonal projector and
\begin{equation}
  V(\delta)=e^{i\delta}Q+(\mathbb{I}-Q)=\mathbb{I}+c\,Q,\qquad c=e^{i\delta}-1 .
  \label{eq:twochanV}
\end{equation}
Writing $x\equiv1-\cos\delta\in[0,2]$, %so that $c+\bar{c}=-2x$, $|c|^{2}=2x$, $c^{2}+\bar{c}^{2}=4x^{2}-4x$ and $|c|^{4}=4x^{2}$, 
we obtain
\begin{equation}
  \mathcal{E}_{p}(\delta)=\frac{4}{D}\,x\,\big(\alpha-\beta x\big).
  \label{eq:twochan}
\end{equation}
The entangling power of \textit{any} two-projector unitary is a downward parabola in
$1-\cos\delta$, degenerating to a straight line when $\beta=0$.

The derivation is short. For the partial transpose, $V^{T_{A}}=\mathbb{I}+cQ^{T_{A}}$ with $Q^{T_{A}}$ Hermitian, so $V^{T_{A}}V^{T_{A}\dagger}=\mathbb{I}+(c+\bar{c})Q^{T_{A}}+|c|^{2}(Q^{T_{A}})^{2}$. %and
%
%\begin{align}
%  I_{T}=\;&d_{A}d_{B}+\big[(c+\bar{c})^{2}+2|c|^{2}\big]\Tr (Q^{T_{A}})^{2}\notag\\
 % &+2(c+\bar{c})\Tr Q^{T_{A}}+2(c+\bar{c})|c|^{2}q_{3}\notag\\
%  &+|c|^{4}q_{4}.
%\end{align}
%
Using $\Tr Q^{T_{A}}=\Tr (Q^{T_{A}})^{2}=r$ as above, we arrive at
\begin{equation}
  I_{T}=d_{A}d_{B}+4x^{2}\big(r+q_{4}-2q_{3}\big).
  \label{eq:ITtwochan}
\end{equation}
%
%This cancellation is general; Sec.~\ref{sec:twoproj} exhibits it in a special case. 
For the realignment, we can use the trace identities above 
and expanding $\Tr[(V^{R}V^{R\dagger})^{2}]$ gives
\begin{equation}
  I_{R}=d_{A}^{2}d_{B}^{2}-4x\,\alpha+4x^{2}\big(r^{2}+I_{R}(Q)-2\kappa\big),
  \label{eq:IRtwochan}
\end{equation}
and substituting Eqs.~\eqref{eq:ITtwochan} and \eqref{eq:IRtwochan} into Eq.~\eqref{eq:master} yields Eq.~\eqref{eq:twochan}.

A few observations, all concerning $\mathcal{E}_{p}$ along the circle of Eq.~\eqref{eq:twochanV}:

(a) {\em Region near the Identity gate --} Near $\delta=0$, $V(\delta)\simeq \mathbb{I}$, $x\simeq\delta^{2}/2$, and $\mathcal{E}_{p}\simeq(2\alpha/D)\,\delta^{2}$. The single invariant $\alpha$ fixes the behavior near the Identity gate. Positivity forces $\alpha\ge0$.

(b) {\em Corner behavior -- } At $\delta=\pi$, $\mathcal{E}_{p}(\pi)=\tfrac{8}{D}(\alpha-2\beta)\ge0$ recovering Eq.~\ref{eq:cornervalue}. Note, this implies, $\alpha\ge2\beta$ always, and
\begin{equation}
  \mathcal{E}_{p}''(\pi)=-\frac{4}{D}\big(\alpha-4\beta\big).
\end{equation}
The entangling power at the corner is a maximum if $\alpha\ge 4\beta$,\footnote{At $\alpha= 4\beta$, $\mathcal{E}_{p}''(\pi)=0$ and the maximum is determined by the fourth-order Taylor expansion.} a minimum if $\alpha<4\beta$.  %At $n\ge3$ the  requires the full Hessian.

(c) {\em Away from the corners -- } If $\alpha<4\beta$ the maximum of $\mathcal{E}_{p}$ on the circle is away from a corner, in the interior at $x_{*}=\alpha/2\beta$, with value $\alpha^{2}/(\beta D)$; otherwise the corner is the maximum. 

(d) {\em Non-entangling corner -- } The conditions $\mathcal{E}_{p}(\pi)=0$, $\alpha=2\beta$ and $\mathcal{E}_{p}(\delta)=(2\alpha/D)\sin^{2}\delta$ are equivalent: the corner is non-entangling if and only if the entangling power is a pure $\sin^{2} \delta$.

%%%%%%%%%%%%%%%%%%%%%%%%%%%%%%%%%%%%%%%%%%%%%%%%%%%%%%%%%%%%%%%%%%%%%%%%
\subsection{\label{sec:gates}Classification of corner gates}
%%%
\begin{table*}[t]
\caption{\label{tab:gates}Standard quantum gates and the corner criterion, grouped by
whether they lie in the Clifford group. $Q$ is the projector of the reflection
$\mathbb{I}-2Q$, $r=\Tr Q$, and $\alpha,\beta$ are from Eqs.~\eqref{eq:alpha} and
\eqref{eq:beta}; see the text for how the entries are read. In the third block, where the
gates are not corners, the $Q$ column lists the distinct eigenvalues instead. Here
$\ket{\pm}=(\ket{0}\pm\ket{1})/\sqrt{2}$ and $P_{s}$ is the two-qubit singlet projector.}
\begin{ruledtabular}
\footnotesize
\begin{tabular}{lccccccc}
gate & $d_A\!\times\!d_B$ & $Q$ & $r$ & $\alpha$ & $\beta$ & $\mathcal{E}_p$ & type \\
\colrule
\multicolumn{8}{l}{\textit{Clifford corner gates}}\\
controlled-$Z$ & $2\times2$ & $\ketbra{11}{11}$ & $1$ & $1$ & $0$ & $2/9$ & max\footnotemark[1] \\
${\rm CNOT}$ & $2\times2$ & $\ketbra{1}{1}\otimes\ketbra{-}{-}$ & $1$ & $1$ & $0$ & $2/9$ & max\footnotemark[1] \\
Grover diffusion & $2\times2$ & $\ketbra{++}{++}$ & $1$ & $1$ & $0$ & $2/9$ & max\footnotemark[1] \\
${\rm SWAP}$ & $2\times2$ & $P_{s}$ & $1$ & $3$ & $3/2$ & $0$ & min \\
$\sigma_{k}\otimes\sigma_{k}$ & $2\times2$ & $\tfrac12(\mathbb{I}-\sigma_{k}\otimes\sigma_{k})$ & $2$ & $4$ & $2$ & $0$ & min\footnotemark[2] \\
\colrule
\multicolumn{8}{l}{\textit{non-Clifford corner gates}}\\
controlled-controlled-$Z$ & $2\times4$ & $\ketbra{111}{111}$ & $1$ & $3$ & $0$ & $1/5$ & max\footnotemark[3] \\
Toffoli & $2\times4$ & $\ketbra{11}{11}\otimes\ketbra{-}{-}$ & $1$ & $3$ & $0$ & $1/5$ & max\footnotemark[3] \\
Fredkin & $2\times4$ & $\ketbra{1}{1}\otimes P_{s}$ & $1$ & $3$ & $0$ & $1/5$ & max\footnotemark[3] \\
Fredkin & $4\times2$ & $\ketbra{1}{1}\otimes P_{s}$ & $1$ & $6$ & $3/2$ & $1/5$ & marginal\footnotemark[4] \\
$C^{n-1}Z$ & $2\times2^{n-1}$ & $\ketbra{1\cdots1}{1\cdots1}$ & $1$ & $2^{n-1}\!-\!1$ & $0$ &
 $\dfrac{4(2^{n-1}-1)}{3\cdot2^{n-1}(2^{n-1}+1)}$ & max\footnotemark[5] \\
\colrule
\multicolumn{8}{l}{\textit{not corner gates}}\\
$S$ & --- & $\{1,i\}$ & --- & --- & --- & $0$ & ---\footnotemark[6] \\
$i{\rm SWAP}$ & $2\times2$ & $\{1,\pm i\}$ & --- & --- & --- & $2/9$ & --- \\
$T$ & --- & $\{1,e^{i\pi/4}\}$ & --- & --- & --- & $0$ & ---\footnotemark[6] \\
controlled-$S$ & $2\times2$ & $\{1,i\}$ & --- & --- & --- & $1/9$ & --- \\
controlled-$T$ & $2\times2$ & $\{1,e^{i\pi/4}\}$ & --- & --- & --- & $(2-\sqrt2)/18$ & --- \\
$\sqrt{{\rm SWAP}}$ & $2\times2$ & $\{1,i\}$ & --- & --- & --- & $1/6$ & --- \\
\end{tabular}
\footnotetext[1]{Also the global maximum over all two-qubit gates.}
\footnotetext[2]{A product of single-qubit unitaries, hence non-entangling.}
\footnotetext[3]{A maximum along its own circle but a saddle point on the full unitary group, with one descending direction.}
\footnotetext[4]{$\alpha=4\beta$: the quadratic term vanishes and the type is fixed at fourth order, where the corner is a quartic maximum.}
\footnotetext[5]{Clifford only for $n=2$; the entry gives $2/9$, $1/5$, $7/54$ and $5/68$ for $n=2,3,4,5$.}
\footnotetext[6]{A single-qubit gate, hence local, with vanishing entangling power for any bipartition.}
\end{ruledtabular}
\end{table*}
%%%

Whether a given gate can occur as a ``corner gate'' has a complete and elementary answer: a particular quantum gate
$U$ can be realized as a
corner gate of some projector family if and only if $U^{2}\propto\mathbb{I}$. Necessity is immediate: a corner gate is $e^{i\theta_{n}}R_{\bm{m}}$ with $R_{\bm{m}}^{2}=\mathbb{I}$, so $U^{2}=e^{2i\theta_{n}}\mathbb{I}$. For sufficiency, note first that the proportionality constant is unimodular because $U^2$ is itself a unitary matrix: $U^2=c \,\mathbb{I}$ and $(c\mathbb{I})^\dagger(c\mathbb{I})=\mathbb{I}$ implies $|c|^2=1$. Then one can write $U^{2}=e^{2i\varphi}\mathbb{I}$ and
$W\equiv e^{-i\varphi}U$ is unitary and satisfies $W^{2}=\mathbb{I}$, hence is Hermitian, $W^{\dagger}=W^{\dagger}W^{2}=W$, and is therefore itself a generalized reflection. A general reflection has eigenvalues drawn from  $\pm 1$ and we denote the  projectors onto the $-1$ and $+1$ eigenspaces by $Q$ and $\mathbb{I}-Q$, respectively, which leads to $W=\mathbb{I}-2Q$.  Therefore  $W$ is a corner of every family whose projectors can be summed to give  $Q$. That is, each projector lies entirely inside $Q{\cal H}$ or $(1-Q){\cal H}$. The same holds for $U=e^{i\varphi}W$, with $\varphi$ in the role of the reference phase $\theta_{n}$ in Eq.~(\ref{eq:reflection}).

%Therefore $W$ is  acorner of any resolution refining $\{Q,\mathbb{I}-Q\}$. Equivalently, $U$ has at most twodistinct eigenvalues and their ratio is $-1$. This is weaker than being a corner of a \textit{given} family, which additionally requires $Q$ to be a sum of that family's projectors and is what the count $2^{n-1}$ enumerates.

%The criterion cuts across the Clifford hierarchy rather than along it. 
The corner gates
include the Clifford entanglers: the controlled-$Z$, ${\rm CNOT}=\mathbb{I}-2\,
\ketbra{1}{1}\otimes\ketbra{-}{-}$, the SWAP, and products of Pauli operators, but also
gates outside the Clifford group: the Toffoli gate is $\mathbb{I}-2\,\ketbra{11}{11}\otimes
\ketbra{-}{-}$, the controlled-controlled-$Z$ is $\mathbb{I}-2\,\ketbra{111}{111}$, the
Fredkin gate is $\mathbb{I}-2\,\ketbra{1}{1}\otimes P_{s}$ with $P_{s}$ the two-qubit
singlet projector, and every multiply controlled $Z$, as well as every Boolean phase
oracle, is of this form. Excluded are the gates whose eigenvalue ratios are finer roots of
unity: the $T$ gate and its controlled versions, the phase gate $S$, $\sqrt{\rm SWAP}$, and 
$i{\rm SWAP}$. 

Table~\ref{tab:gates} applies the criterion to the standard gates, grouped by whether they
lie in the Clifford group.
Each corner gate is treated there as its own two-channel family, so $n=2$ throughout and the
type of the corner follows from the criterion $\alpha\gtrless4\beta$ of
Sec.~\ref{sec:twochannel}; gates sharing $(\alpha,\beta)$ within a block are locally
equivalent. The gates in the third block are not corner gates.
% the reflection data do not apply, since their eigenvalue ratios are not $-1$, but the entangling power remains defined and is given.

The distinction between the Toffoli and $T$ gates is worth stating explicitly, since both are
standard non-Clifford resources: the Toffoli gate is a corner because it is self-inverse,
whereas $T$ is not, with $T^{2}=S$ and $T^{8}=\mathbb{I}$. %Being a single-qubit gate, $T$ has vanishing entangling power in any case; 
Other non-Clifford entanglers excluded by the
criterion are the controlled-$S$ and controlled-$T$ gates. In fact the Toffoli and Hadamard gates alone are universal for quantum
computation~\cite{Shi:2002cli,Aharonov:2003hjn}, and the Hadamard gate is itself a Hermitian
involution, hence also a corner gate. There is therefore a universal gate set consisting
entirely of corner gates, even though the canonical magic gate $T$ is not one. 
\section{\label{sec:apps} Applications}

We now apply Theorem~\ref{thm:main} to a variety of physical systems. The examples
cover spectral resolutions with $n=2,3,$ and $4$ projectors, and their corners include
minima, maxima, and saddle points. Moreover, Eq.~\eqref{eq:cornervalue} gives the
entangling power at each corner without further computation.

%%%

\subsection{Two-qubit gates}
As a first application let us consider the simplest quantum mechanical example of two-qubit gates: $d_A=d_B=2$. In particular we consider the family of unitaries whose eigenvectors  are the Bell states
$\{\ket{\Phi^\pm},\ket{\Psi^\pm}\}$, with
$\ket{\Phi^{\pm}}=(\ket{00}\pm\ket{11})/\sqrt2$ and
$\ket{\Psi^{\pm}}=(\ket{01}\pm\ket{10})/\sqrt2$,
\begin{equation}
  U_B(\lambda)=\sum_{a} e^{i\lambda_a}B_a,\qquad B_a=\ket{a}\!\bra{a},
  \label{eq:belldiag}
\end{equation}
where  $a\in\{\Phi^+,\Phi^-,\Psi^+,\Psi^-\}$. In other words, $U_B$ is diagonal in the Bell basis and its entangling  power  is  given by the three relative phases between the Bell states, e.g. $\delta_a=\lambda_a-\lambda_{\Psi^-}$, on a
three-torus. Theorem~\ref{thm:main} then guarantees $2^{3}=8$ stationary corners.

Note that $U_B$ captures the entanglement dynamics of all two-qubit logic gates. The most general two-qubit gate is represented by $U\in SU(4)$ but  entanglement and entangling power are invariant under local unitaries, $U\to (u_1\otimes u_2)\,U\,(u_3\otimes u_4)$. So the space of inequivalent entanglers is characterized by the coset $SU(4)/SU(2)\otimes SU(2)$ which, using the Cartan decomposition, is parameterized by 
\be
\label{eq:ud}
U_d=e^{i\beta_k \sigma_k\otimes\sigma_k}\ , 
\ee
also on a 3-torus \cite{Zhang:2003zz,Low:2021ufv}. The eigenbasis of $U_d$ is exactly the Bell basis, with eigenvalues $e^{i\lambda_a}$ given by
\bea
\label{eq:belleig}
\lambda_ {\Phi^+}&=& \beta_1- \beta_2 + \beta_3 \ ,\quad   \lambda_{\Phi^-} =-\beta_1 + \beta_2 + \beta_3 \nonumber\\
\lambda_{\Psi^+} &=& \beta_1 + \beta_2 - \beta_3 \ , \quad 
\lambda_{\Psi^-} =-\beta_1-\beta_2- \beta_3\ .
\eea
The entanglement dynamics of all inequivalent two-qubit gates is therefore captured by Eq.~\eqref{eq:belldiag}, which has another form
\begin{equation}
U_B=\sum_{k=0}^{3}g_k\,\sigma_k\otimes\sigma_k
\end{equation}
with $\sigma_0=I$. In the above, $g_k$ can be written as
\begin{equation}
\begin{pmatrix} g_0 \\ g_1\\g_2\\g_3 \end{pmatrix}
 = \frac14
  \begin{pmatrix}
    \phantom{-}1 & \phantom{-}1 & \phantom{-}1 & \phantom{-}1\\
    \phantom{-}1 & -1 & \phantom{-}1 & -1\\
    -1 & \phantom{-}1 & \phantom{-}1 & -1\\
    \phantom{-}1 & \phantom{-}1 & -1 & -1
  \end{pmatrix} 
\begin{pmatrix} 
e^{i\lambda_{\Phi^+}} \\ e^{i\lambda_{\Phi^-}} \\ e^{i\lambda_{\Psi^+}} \\ e^{i\lambda_{\Psi^-}} \end{pmatrix}
  \ .
  \label{eq:signtable}
\end{equation}
Using this form to evaluate the invariants of Eq.~\eqref{eq:master} are particularly simple. We find 
\begin{equation}
  I_R=16\sum_{k=0}^{3}|g_k|^4\ , \qquad  I_T=\sum_a|\chi_a|^4\ ,
  \label{eq:bellIR}
\end{equation}
where
\begin{equation}
\begin{pmatrix} \chi_{\Phi^+} \\ \chi_{\Phi^-} \\ \chi_{\Psi^+} \\ \chi_{\Psi^-} \end{pmatrix}
 =\frac12
  \begin{pmatrix}
    \phantom{-}1 & \phantom{-}1 & \phantom{-}1 & -1\\
    \phantom{-}1 & \phantom{-}1 & -1 & \phantom{-}1\\
    \phantom{-}1 & -1 & \phantom{-}1 & \phantom{-}1\\
    -1 & \phantom{-}1 & \phantom{-}1 & \phantom{-}1
  \end{pmatrix}
\begin{pmatrix} e^{i\lambda_{\Phi^+}} \\ e^{i\lambda_{\Phi^-}} \\ e^{i\lambda_{\Psi^+}} \\ e^{i\lambda_{\Psi^-}} \end{pmatrix} \ .
  \label{eq:bellchi}
\end{equation}
The entangling power for $U_B$ then takes the closed form
\begin{equation}
  \mathcal{E}_p(\lambda)=\frac59-\frac{1}{36}\Big(16\sum_k|g_k|^4+\sum_a|\chi_a|^4\Big).
  \label{eq:bellEp}
\end{equation}

To investigate the nature of the corners, recall that up to an overall phase we may write
\begin{equation}
    R_{S}=\mathbb{I}-2Q_{S}
\end{equation}
with $Q_{S}=\sum_{a\in S}B_a$ for a subset $S\subseteq\{\Phi^+,\Phi^-,\Psi^+\}$. The eight possible subsets thus correspond to the 8 corners of the torus of Theorem~\ref{thm:main}. For $S=\varnothing$, $R_{S}=\mathbb{I}$; for $|S|=2$ one finds
$R_{S}=\pm\,\sigma_k\otimes\sigma_k$, which is a local unitary in the equivalence class of $\mathbb{I}$; and for $|S|=1$ or $3$,
$R_{S}=\pm(\mathbb{I}-2B_a)$ for a single Bell projector, which is locally equivalent to the SWAP. Thus, every one of
the eight corners is locally equivalent to $\mathbb{I}$ or SWAP, implying
\begin{equation}
  \mathcal{E}_p(\bm{\delta}_{\bm{m}})=0,\quad\forall\;\bm{m}.
\end{equation}
In terms of $\beta_k$ introduced in Eq.~(\ref{eq:ud}),   the relative eigenphases are
$\bm{\delta}=2(\beta_1+\beta_3,\;\beta_2+\beta_3,\;\beta_1+\beta_2)$ from
Eq.~\eqref{eq:belleig}, so a corner requires
$\beta_i+\beta_j=0$ modulo $\pi/2$ for every pair. Adding two of these conditions and
subtracting the third gives $2\beta_k\equiv0$ modulo $\pi/2$, so 
\begin{equation}
  \beta_k=n_k\frac{\pi}{4}\ ,\qquad n_i+n_j=0 \ \mathrm{mod}\ 2 .
  \label{eq:cornerbeta}
\end{equation}
Among the eight corners, four are locally equivalent to $\mathbb{I}$ and the other four are in the equivalence class of SWAP.  That is, all the stationary corner points for two-qubit gates are global minima  corresponding to  the Identity gate or the SWAP gate, which are the only two  two-qubit gates with vanishing entangling power \cite{Low:2021ufv}.\footnote{Interestingly, the corner conditions in Eq.~(\ref{eq:cornerbeta})  coincide with some of the vertices of Weyl chamber of $SU(4)/SU(2)\otimes SU(2)$, which was used to   classify the equivalence classes, under local unitaries, of  all two-qubit gates  in Ref.~\cite{Zhang:2003zz}.}

A physical system realizing $U_B$ in  two-qubit case is the two-site spin-1/2 spin-chain model with the Hamiltonian:
\be
\label{eq:2siteH}
H =a_x S_x\otimes S_x+a_y S_y\otimes S_y+a_z S_z\otimes S_z \ ,
\ee
where $S_k =\sigma^k/2$, $k=1,2,3$. Observe that its time-evolution operator $U(t)=e^{iHt}$ is exactly of the form $U_d$ introduced in Eq.~(\ref{eq:ud}). Therefore the eigenvalues of $H$ are  given by Eq.~(\ref{eq:belleig}), with the replacement $\beta_k\to a_k/4$, while the eigenvectors are maximally entangled Bell states, independent of $a_k$. Both the instantaneous and time-averaged entangling power of the time-evolution operator $U(t)=e^{iHt}$ were computed in Ref.~\cite{Low:2026oyf}. Here we recast the results there in light of Theorem ~\ref{thm:main}.

The time-evolution generated by Eq.~\eqref{eq:2siteH} follows along a straight line on the
torus: ${\beta}_k(t)=a_k t/4$. The instantaneous entangling power was obtained in closed
form in Ref.~\cite{Low:2026oyf}; which in the present variables reads
\begin{equation}
  \mathcal{E}_p(t)=\frac{2}{9}\Big[\,1-\prod_{k}\cos^{2}2\beta_k-\prod_{k}\sin^{2}2\beta_k\,\Big] .
  \label{eq:xyzEp}
\end{equation}
The bounds $0\le\mathcal{E}_p\le2/9$ are manifest, and both are attained on the same
quarter-turn lattice $\beta_k=n_k\pi/4$ that appears in Eq.~\eqref{eq:cornerbeta}, sorted by
parity: the entangling power vanishes when all three $n_k$ share an even/odd-parity, which is
Eq.~\eqref{eq:cornerbeta} itself, and reaches $2/9$ when two of the $\beta_k$ sit on the
lattice with opposite even/odd parities, the third being unconstrained.

The eight corners of Eq.~\eqref{eq:belldiag} are not the only reflections available on two
qubits, and the others need not be non-entangling. As an illustration, Ref.~\cite{Low:2026oyf} isolates the dependence of the entangling power on eigenvectors
alone by means of the isospectral Hamiltonians $H(\ket{\psi})=J(\mathbb{I}/4-\ketbra{\psi}{\psi})$,
which share the spectrum of the two-site XXX model for an arbitrary  normalized  $\ket{\psi}$. Their time
evolution, $e^{iHt}\propto\mathbb{I}+(e^{-iJt}-1)\ketbra{\psi}{\psi}$, is exactly the two-channel
family of Eq.~\eqref{eq:twochanV} with $Q=\ketbra{\psi}{\psi}$ and $\delta=-Jt$. It turns out the two invariants, $\alpha$ and $\beta$ in Eq.~\eqref{eq:twochan}, only depend on the concurrence $C$ \cite{Hill:1997pfa,Wootters:1997id} of $|\psi\rangle$,
\begin{equation}
  \alpha=1+2C^{2} ,\qquad \beta=C^{2}\Big(1+\frac{C^{2}}{2}\Big) .
  \label{eq:isoab}
\end{equation}
The instantaneous entangling power then follows from Eq.~(\ref{eq:twochan}). Moreover, the corner at $-Jt=\pi$ is the reflection
$\mathbb{I}-2\ketbra{\psi}{\psi}$, where Eq.~\eqref{eq:cornervalue} gives
\begin{equation}
  \mathcal{E}_{p}\big(\mathbb{I}-2\ketbra{\psi}{\psi}\big)=\frac{2}{9}\big(1-C^{4}\big) .
  \label{eq:rankonepsi}
\end{equation}
 \begin{figure*}[t]
    \centering
    \includegraphics[width=0.75\linewidth]{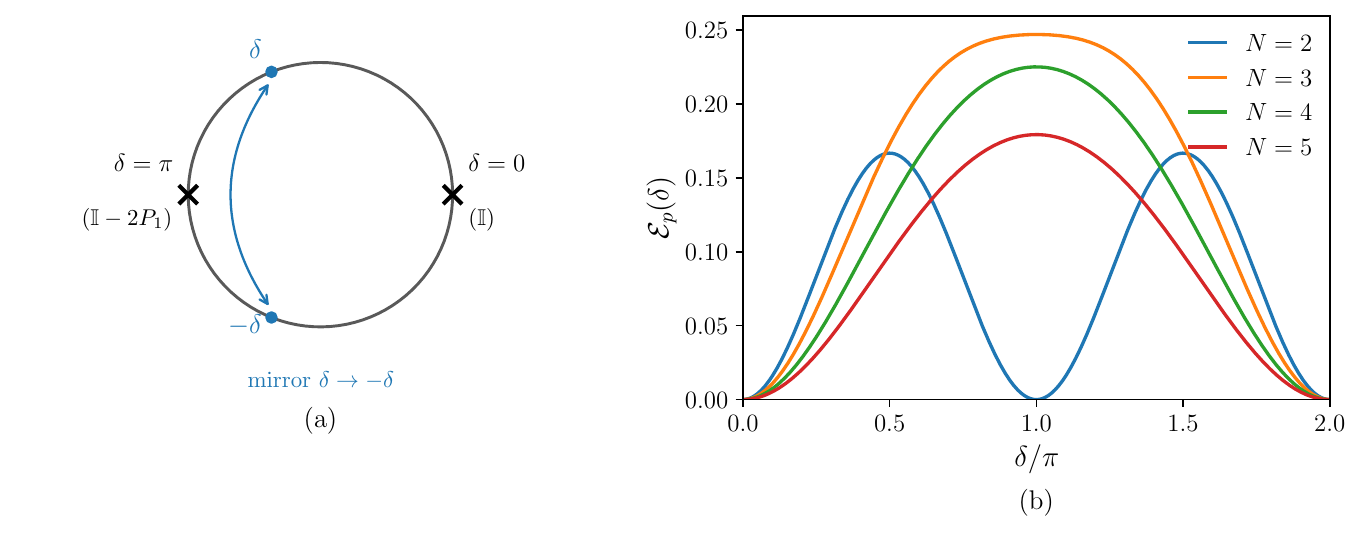}
    \caption{\label{fig:two_proj_fig}
    (a) The two-projector unitary circle, corresponding to the boundary
    of Fig.~1 in Ref.~\cite{McGinnis:2025iab} and parametrized by the
    relative phase $\delta$. The fixed points $\delta=0,\pi$ under
    $\delta\to-\delta$ are stationary by Theorem~\ref{thm:main}.
    (b) Entangling power of the
    singlet--adjoint family for $N=2,3,4,5$. The reflection at
    $\delta=\pi$ is a minimum for $N=2$ and a maximum for $N\ge3$.}
\end{figure*}
Although the entangling power at the corner is stationary, it need not vanish: if $|\psi\rangle$ is a product state, $C=0$,  $\mathcal{E}_{p}$ reaches the maximum at $2/9$, while if $|\psi\rangle$ is a Bell state, $C=1$, then $\mathcal{E}_{p}=0$.  The corner gate is locally equivalent to  the controlled-$Z$ gate  at $C=0$ and the SWAP at  $C=1$.

\subsection{Two-projector for $d_{A}=d_{B}=N$}
\label{sec:twoproj}

We now set $d_A=d_B=N$ and consider two projectors satisfying
$P_1+P_2=\mathbb{I}$. After removing an overall phase, the unitary
depends only on the relative phase $\delta$,
\begin{equation}
  V(\delta)=e^{i\delta}P_1+P_2,
  \label{eq:refl}
\end{equation}
with $V(0)=\mathbb{I}$ and $V(\pi)=P_2-P_1=\mathbb{I}-2P_1$.
Thus, the phase torus reduces to the circle in
Fig.~\ref{fig:two_proj_fig}(a), with stationary corners at
$\delta=0,\pi$ by Theorem~\ref{thm:main}.

The two decompositions below are the $SU(N)$-invariant channels for
fundamental--fundamental and fundamental--antifundamental particle scattering.
Ref.~\cite{McGinnis:2025brt,McGinnis:2025iab} studied their projector decompositions, while Ref.~\cite{McGinnis:2025xgt} analyzed crossing
and separability. Here we apply Theorem~\ref{thm:main} to the
 entangling power on the  circle $\delta\in[0, 2\pi)$. The fundamental--fundamental channel decomposes into the symmetric and antisymmetric irreducible representations, $\mathbf{N}\otimes\mathbf{N}=\mathbf{S}\oplus\mathbf{A}$, whose two corners are global minima. The fundamental--antifundamental channel is the
singlet--adjoint case, $\mathbf{N}\otimes\overline{\mathbf{N}}=\mathbf{1}\oplus\mathbf{adj}$, and the $\delta=\pi$ corner is instead a maximum for $N\ge3$.

\subsubsection{$\mathbf{N}\otimes\mathbf{N}=\mathbf{S}\oplus\mathbf{A}$}
\label{sect:nnsa}

The channel projectors are
\bea
  P_1=P_S &\equiv& \frac12\left(\mathbb{I}+S_W\right)\ ,\nonumber\\
  P_2=P_A &\equiv& \frac12\left(\mathbb{I}-S_W\right)\ ,
  \label{eq:SAproj}
\eea
where $S_W$ is the SWAP operator interchanging the two qudits.
The relevant invariants of Eqs.~\eqref{eq:alpha} and \eqref{eq:beta} are
$\alpha=N^2(N^2-1)/4$ and $\beta=\alpha/2$. Therefore,
\begin{equation}
  \mathcal{E}_p(\delta)
  =\frac{N-1}{2(N+1)}\,\sin^2\delta .
  \label{eq:Eexch}
\end{equation}
The corners $V(0)=\mathbb{I}$ and $V(\pi)=-S_W$ are, up to an overall
phase, the Identity gate and the SWAP gate, and hence global minima with
$\mathcal{E}_p=0$.

Substituting Eq.~\eqref{eq:SAproj} into Eq.~\eqref{eq:refl} gives
\bea
  V(\delta)&=&e^{i\delta/2}
  \left(\cos\frac{\delta}{2}\,\mathbb{I}
  +i\sin\frac{\delta}{2}\,S_W\right) \nonumber \\
  &=&e^{i\delta/2}e^{i(\delta/2)S_W},
  \label{eq:vsw}
\eea
where the last equality follows from $S_W^2=\mathbb{I}$. Up to the
overall phase $e^{i\delta/2}$, Eq.~\eqref{eq:vsw} is the qudit extension
of the partial-SWAP family introduced for qubits in
Refs.~\cite{Scarani:2002djz,Ziman:2002uuo}, with parameter
$\eta=\delta/2$. For $N=2$, Eq.~\eqref{eq:Eexch} agrees with the
$(\mathrm{SWAP})^\nu$ result of Ref.~\cite{Fan:2005opw} upon setting
$\nu=-\delta/\pi$. For arbitrary $N$, it agrees with
Ref.~\cite{Jonnadula2020}, with $t=\delta/2$, after undoing the
normalization of the entangling power used there.

\subsubsection{$\mathbf{N}\otimes\overline{\mathbf{N}}=\mathbf{1}\oplus\mathbf{adj}$}

Let $\{\ket{i}\}_{i=1}^{N}$ be an orthonormal basis of the fundamental
space and $\{\ket{\bar i}\}_{i=1}^{N}$ the corresponding conjugate
basis of the antifundamental space. Define
\begin{equation}
  \begin{aligned}
    \ket{\Phi}
      &\equiv \frac{1}{\sqrt N}\sum_{i=1}^{N}
        \ket{i}\otimes\ket{\bar i},\\
    P_\Phi
      &=\ketbra{\Phi}{\Phi},\qquad
        P_{\mathrm{adj}}=\mathbb{I}-P_\Phi .
  \end{aligned}
  \label{eq:singletproj}
\end{equation}
The reduced unitary is $V(\delta)=e^{i\delta}P_\Phi+P_{\mathrm{adj}}$.
For $Q=P_\Phi$,
one finds $\alpha=N^2-1$ and  $\beta=2-{2}/{N^2}$.
Eq.~\eqref{eq:twochan} then gives
\begin{equation}
  \mathcal{E}_p(\delta)
  =\frac{4(N-1)}{N^2(N+1)}\,x
   \left(1-\frac{2x}{N^2}\right)  \ ,
  \label{eq:Esa}
\end{equation}
where $x=1-\cos\delta$.
At the corner $\delta=\pi$,
\begin{equation}
 \begin{aligned}
  \mathcal{E}_p(\pi)
  &=\frac{8(N-1)(N^2-4)}{N^4(N+1)},\\
  \mathcal{E}_p''(\pi)
  &=\frac{4(N-1)(8-N^2)}{N^4(N+1)} .
 \end{aligned}
  \label{eq:sacorner}
\end{equation}
As a function of $x\in[0,2]$, Eq.~\eqref{eq:Esa} is a concave
quadratic with vertex $x_*=N^2/4$. For $N=2$, the corner at
$\delta=\pi$ is a global minimum. For every integer $N\ge3$, one has
$x_*>2$, so the corner is a global maximum, as illustrated in
Fig.~\ref{fig:two_proj_fig}(b).

For $N=2$, pseudoreality implies $\overline{\mathbf{2}}$ is unitarily equivalent to $\mathbf{2}$ and the situation reduces to that discussed in Sec.~\ref{sect:nnsa}.

\subsection{Three-projectors for $d_{A}=d_{B}=3$ }
\label{sec:threeproj}

The two-projector examples above reduce the relative-phase torus to a circle. We now turn to the simplest genuinely higher-dimensional case, with three projectors and two independent relative phases. A natural realization is the two-site spin-1 chain model, which generalizes the two-site spin-1/2 model in Eq.~(\ref{eq:2siteH}). Angular-momentum addition decomposes the total 
Hilbert space into sectors of total spin $j=0,1,2$:
\begin{equation}
  \mathbf{3}\otimes\mathbf{3}
  =\mathbf{1}\oplus\mathbf{3}\oplus\mathbf{5},
  \qquad
  P_j\equiv\sum_{m=-j}^{j}\ketbra{j,m}{j,m}.
  \label{eq:spin1proj}
\end{equation}
Here $P_j$ projects onto the total-spin-$j$ sector, $m$ is the
total-spin projection in the $\hat{z}$ direction, and $\sum_jP_j=\mathbb{I}$. The boldface
numbers in Eq.~\eqref{eq:spin1proj} denote the dimensions $2j+1$ of
the corresponding $SU(2)$ representations.

For simplicity we assume rotational invariance, which requires
the Hamiltonian to be a function of $\mathbf{J}^2$, where $\mathbf{J}$ is the total angular momentum. For the two-site spin-$1$ chain, the total-spin quantum number takes the values $j=0,1,2$, and $\mathbf{J}^2$ has the three distinct eigenvalues
$j(j+1)=0,2,6$. It therefore satisfies the minimal polynomial,\footnote{The full characteristic polynomial has its powers weighted by the eigenvalue degeneracies: $\mathbf{J}^2
  \left(\mathbf{J}^2-2\mathbb{I}\right)^3
  \left(\mathbf{J}^2-6\mathbb{I}\right)^5=0$.}
\begin{equation}
  \mathbf{J}^2
  \left(\mathbf{J}^2-2\mathbb{I}\right)
  \left(\mathbf{J}^2-6\mathbb{I}\right)=0\ ,
  \label{eq:spin1minimal}
\end{equation}
which implies that every higher power
of $\mathbf{J}^2$ reduces to a quadratic polynomial. Thus
\begin{equation}
  H=c_0\mathbb{I}+c_1\mathbf{J}^2
    +c_2\left(\mathbf{J}^2\right)^2.
\end{equation}
The coefficient $c_0$ produces only an overall phase in the
time-evolution operator, leaving two independent interactions.

With $S_k$ denoting the spin-$1$ generators, one of the rotationally invariant interactions can be chosen as the ``dipole'' interaction
\begin{equation}
  \sum_{k=1}^{3}S_k\otimes S_k
  =\frac12\left(\mathbf{J}^2-4\mathbb{I}\right).
  \label{eq:spin1bilinear1}
\end{equation}
The term quadratic in $\mathbf{J}^2$ is related to the traceless ``quadrupole'' interaction,
\begin{equation}
  Q_{k\ell}\equiv
  \frac12\{S_k,S_\ell\}
  -\frac23\delta_{k\ell}\mathbb{I},
\end{equation}
which is absent for spin $1/2$. More explicitly,
\begin{align}
  \sum_{k,\ell=1}^{3}Q_{k\ell}\otimes Q_{k\ell}
    &=
    \left(\sum_{k=1}^{3}S_k\otimes S_k\right)^2
    \nonumber\\
    &\quad+\frac12\sum_{k=1}^{3}S_k\otimes S_k
      -\frac43\mathbb{I}.
  \label{eq:spin1quadrupole}
\end{align}
It is customary to write the most general Hamiltonian in the 
 bilinear--biquadratic (BLBQ) form:
\begin{align}
  H_{\rm BLBQ}
    &=J\sum_{k=1}^{3}S_k\otimes S_k
%    \nonumber\\
   % &\quad
      +K\left(\sum_{k=1}^{3}S_k\otimes S_k\right)^2\ .
 %   \nonumber\\
   % &=\sum_{j=0}^{2}E_jP_j,
 % \nonumber\\
 % (E_0,E_1,E_2)
   % &=(-2J+4K,\,-J+K,\,J+K).
  \label{eq:blbqH}
\end{align}
Using 
\begin{align}
  \sum_{k=1}^{3}S_k\otimes S_k
    &=-2P_0-P_1+P_2,
%  \nonumber\\
 % \left(\sum_{k=1}^{3}S_k\otimes S_k\right)^2
   % &=4P_0+P_1+P_2.
  \label{eq:spin1bilinear}
\end{align}
it is easy to solve for the eigenvalues of $H_{\rm BLBQ}$:
\be 
(E_0,E_1,E_2)
   =(-2J+4K,\,-J+K,\,J+K).
\ee
The three energy eigenvalues  contain two independent differences,
which become the two relative phases of the corresponding
three-projector time-evolution operator.

\begin{figure*}[t]
  \centering
  \includegraphics[width=0.8\textwidth]{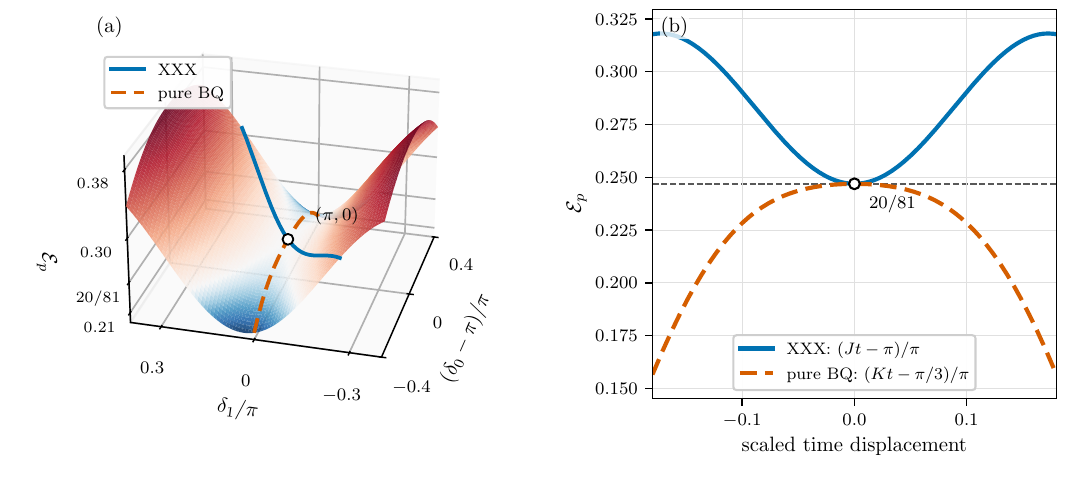}
  \caption{\label{fig:spin1_saddle}
  Entangling power near the stationary corner
  $(\delta_0,\delta_1)=(\pi,0)$ of the two-site spin-$1$ BLBQ family.
  (a) Entangling power in relative-phase coordinates, with the XXX
  (solid blue) and pure-biquadratic (dashed orange) trajectories.
  (b) Corresponding one-dimensional trajectories of the time-evolution: the corner  is a local minimum along the XXX trajectory
  but a local maximum along the pure-biquadratic trajectory.}
\end{figure*}

The full entangling-power  associated with the projectors
$P_{0,1,2}$ was obtained in the  context of high energy collisions in
Ref.~\cite{Kirchner:2023dvg}, while the instantaneous and
time-averaged dynamics of the two-site spin chain were studied in
Ref.~\cite{Low:2026oyf}. Here we use Theorem~\ref{thm:main} to
identify the stationary gates and determine how they appear along
physical Hamiltonian trajectories.

After removing the phase of the total spin $j=2$ channel, the
time-evolution operator  $U(t)=e^{iH_{\rm BLBQ}t}$ becomes
\begin{align}
  U(t)&\to V(\delta_0,\delta_1)
   =e^{i\delta_0}P_0+e^{i\delta_1}P_1+P_2,
  \nonumber\\
  \delta_0&%=(E_0-E_2)t
  =3(K-J)t,\quad 
  \delta_1=-2Jt.
  \label{eq:blbqmap}
\end{align}
The two relative phases depend independently on the bilinear and
biquadratic couplings, so varying $Jt$ and $Kt$ spans the full
two-dimensional  torus. For a fixed 
ratio $K/J$, however,  varying $t$ traces a one-dimensional
trajectory on the torus. Away from the energy-degeneracy ratios, all
three energies displayed above are distinct and the evolution
genuinely involves three projectors.

Recall that the even total-spin sectors $j=0,2$ are symmetric under
interchange of the two sites, whereas the $j=1$ sector is
antisymmetric. The SWAP gate is therefore
\begin{equation}
  S_W=P_0-P_1+P_2.
\end{equation}
Theorem~\ref{thm:main} identifies the four stationary corners:
\begin{align}
  V(0,0)&=\mathbb{I},
  &
  V(0,\pi)&=S_W,
  \nonumber\\
  V(\pi,0)&=\mathbb{I}-2P_0,
  &
  V(\pi,\pi)&=S_W(\mathbb{I}-2P_0).
  \label{eq:spin1corners}
\end{align}
The first two are the Identity gate and the SWAP gate and hence have
$\mathcal{E}_p=0$.

At the corner $(\pi,0)$, the reflected projector is $Q=P_0=\ketbra{s}{s}$, where
\begin{equation}
  \ket{s}
  =\frac{1}{\sqrt3}\left(
    \ket{1}\otimes\ket{-1}
    -\ket{0}\otimes\ket{0}
    +\ket{-1}\otimes\ket{1}
  \right)\ .
\end{equation}
Its reduced density matrices are
$\rho_A=\rho_B=\mathbb{I}/3$. Direct evaluation of the projector
invariants gives
\begin{align}
  &r=1,\qquad
  \mu_A=\mu_B=\frac13,\nonumber \\
  &\kappa=I_R(P_0)=I_T(P_0)=q_3=\frac19.
\end{align}
It follows from Eqs.~\eqref{eq:alpha} and \eqref{eq:beta} that
  $\alpha=8$ and $\beta={16}/{9}$. Since $D=3^2(3+1)^2=144$, the corner formula
Eq.~\eqref{eq:cornervalue} gives
\begin{equation}
  \mathcal{E}_p(\pi,0)
  =\frac{8}{144}\left(8-\frac{32}{9}\right)
  =\frac{20}{81}.
\end{equation}
Finally, $V(\pi,\pi)=S_WV(\pi,0)$. Although $S_W$ is not a local unitary, it only interchanges the bipartite state and, therefore, preserves the entanglement property. Therefore,
\begin{equation}
  \mathcal{E}_p(\pi,\pi)
  =\mathcal{E}_p(\pi,0)
  =\frac{20}{81}.
  \label{eq:spin1Ep}
\end{equation}

The two entangling corners are saddle points. Fig.~\ref{fig:spin1_saddle}
displays this structure at $(\pi,0)$, where two physical Hamiltonian
trajectories have opposite local curvatures. The isotropic Heisenberg,
or XXX, limit has
$K=0$,
\begin{equation}
  (\delta_0,\delta_1)=(-3Jt,-2Jt)\ ,
\end{equation}
and reaches $(\pi,0)$ when $Jt=\pi$. By contrast, in the pure
biquadratic limit $J=0$,
\begin{equation}
  (\delta_0,\delta_1)=(3Kt,0)\ ,
\end{equation}
and its time-evolution reaches the same corner when $Kt=\pi/3$. Along these two
trajectories,
\begin{align}
  \left.
  \frac{d^2\mathcal{E}_p}{dt^2}
  \right|_{K=0,\,Jt=\pi}
    &=\frac{31}{27}J^2>0,
  \nonumber\\
  \left.
  \frac{d^2\mathcal{E}_p}{dt^2}
  \right|_{J=0,\,Kt=\pi/3}
    &=-\frac{2}{9}K^2<0.
  \label{eq:blbqcurvatures}
\end{align}
Thus the same stationary corner is a local minimum along the XXX evolution
and a local maximum along the pure-biquadratic evolution.  The same conclusion holds at
$(\pi,\pi)$ by SWAP invariance. Theorem~\ref{thm:main} fixes the
stationarity, while the Hamiltonian trajectory determines which
one-dimensional extremum is observed.

Restricting Eq.~\eqref{eq:blbqmap} to the XXX trajectory reproduces
Eq.~(13) of Ref.~\cite{Low:2026oyf}. In particular,
Theorem~\ref{thm:main} explains without differentiation why the
nonzero-entanglement point $Jt=\pi$ is stationary. The time-averaged
dependence on the BLBQ coupling ratio studied in
Ref.~\cite{Low:2026oyf}, however, is controlled by spectral
degeneracies and is not fixed by the theorem.

Taken together, these examples separate universal corner stationarity from model-dependent classification: the projectors fix the phase landscape, while a Hamiltonian selects the trajectory through it.

\section{\label{sec:con} Discussion and Outlook}
In this work, we provided a general theorem identifying stationary points of the entangling power generated by unitary operators acting on a bipartite Hilbert space. In particular, given the spectral decomposition of a unitary operator with $n$ distinct
eigenvalues, we have shown that stationary points exist at the $2^{n-1}$ \textit{corners} of phase inversion,
where every relative eigenphase is $0$ or $\pi$. At each of these points, the unitary operator takes the form $\mathbb{I}-2Q$ up to an overall phase, where $Q$ is the sum of all spectral projectors whose relative phase is $\pi$. We also obtained the entangling power at every
corner in terms of invariants of $Q$. Further, the set of corner operators forms the group $\mathbb{Z}_2^{\,n-1}$. Conversely, we have shown that a
unitary can be realized as a corner of some projector family if and
only if $U^2\propto\mathbb{I}$, a criterion obeyed by both Clifford
and non-Clifford gates. In the special case that the spectral decomposition contains only two distinct eigenvalues, these observations allow us to calculate the entangling power over the entire relative-phase circle.

Notably, while Theorem~\ref{thm:main} guarantees the existence of stationary points, their classification as minima, maxima, or saddle points depends on the specific dynamics of the quantum system. Recall that the two-projector $SU(N)$ families discussed in Sec.~\ref{sec:twoproj} furnish non-entangling minima for the $\mathbf{N}\otimes\mathbf{N}$ family and, for $N\ge 3$, nonzero-entanglement maxima for the $\mathbf{N}\otimes\overline{\mathbf{N}}$ family. The spin-$1$ example in Sec.~\ref{sec:threeproj} further shows that the type of extremum observed along a physical time-evolution depends on the trajectory through the stationary point on the full phase torus. More specifically, in the two-site spin-$1$ BLBQ model, the same saddle is a local minimum along the XXX time-evolution trajectory but a local maximum along the pure-biquadratic trajectory.

The theorem also provides a useful perspective on recent studies of entanglement and enhanced symmetries in low-energy hadronic scattering~\cite{Low:2021ufv,Liu:2022grf,Liu:2023bnr,Hu:2024hex,Hu:2025lua,Low:2026evp}. In these systems, the $S$-matrix decomposes into projectors in spin and/or isospin space, $S=\sum_a e^{2i\delta_a}P_a$.
The enhanced-symmetry points of the two-nucleon $S$-matrix are the Identity gate and the SWAP gate, both of which have vanishing entangling power and coincide with the two-projector corners of Theorem~\ref{thm:main}. For the four-channel scattering of distinguishable spin-$3/2$ baryons studied in Ref.~\cite{Hu:2025lua}, the theorem identifies six additional stationary corners beyond the Identity gate and the SWAP gate. It would be interesting to determine whether any of these corners likewise exhibits an enhanced symmetry.

Several limitations should be kept in mind. The identical-particle construction of Ref.~\cite{Hu:2025lua}, for example, uses a projected sub-unitary $S$-matrix and a modified input average and therefore lies outside the assumptions of Theorem~\ref{thm:main}. Moreover, there are stationary points which are not corners; already for two channels, the maximum lies in the interior when $\alpha<4\beta$. For three or more projectors, classifying a corner generally requires the full Hessian and, when the Hessian is degenerate, higher-order terms. For a non-entangling unitary, the entangling power vanishes and therefore attains its global minimum. The theorem also identifies stationary corners with nonzero entangling power. Finally, we use linear entropy averaged over product inputs; other entanglement measures or input ensembles must be studied separately. 

A natural future direction is to ask which stationary corner gates can be synthesized from symmetry-preserving few-body operations. A basic example is the construction of a three-qudit gate from a sequence of symmetry-preserving two-qudit gates. This question connects Theorem~\ref{thm:main} with reachability results for symmetry-preserving quantum circuits~\cite{Marvian:2020sjz,Marvian:2021ysn,Marvian:2023vps}. For three qudits with $SU(N)$ symmetry, $N\geq3$, the fully symmetric and fully antisymmetric representations each occur once, whereas the mixed-symmetry representation occurs twice. A generic three-qudit $SU(N)$-invariant gate therefore has four distinct eigenvalues and hence four spectral projectors. After removing an overall phase, its entangling power is a function on a three-torus of relative eigenphases, on which Theorem~\ref{thm:main} identifies eight stationary corners. It would be interesting to apply the no-ancilla reachability criterion of Ref.~\cite{Hulse:2024ttn} to see which, if any, of the eight corners can be synthesized using only two-qudit $SU(N)$-invariant gates. It would also be interesting to compare the entangling powers of the reachable and excluded corners and determine whether they are minima, maxima, or saddles on the torus.

A complementary direction is to ask whether the reflection structure of corner gates has practical consequences for their implementation. The numerical circuit-synthesis methods of Refs.~\cite{Ashhab:2022wcx,Ashhab:2023dap} could be used to compare the resources required to synthesize a corner gate with those required for nearby nonreflection gates in the same projector family, using a common native gate set and fidelity threshold. On the experimental side, a platform with tunable controlled-phase gates, such as the transmon-qutrit platform of Ref.~\cite{Goss}, could be used to sweep the relative phases through a corner. Estimating the entangling power on both sides of the crossing would test the predicted vanishing slope and determine the curvature along the implemented trajectory. Such studies would connect the geometric results obtained here to quantum control and gate characterization.

\begin{acknowledgments}
We thank Sahel Ashhab for comments on a previous draft and Pallab Goswami for helpful discussions.
We acknowledge the use of Claude (Anthropic) and ChatGPT (OpenAI) for assistance with computations, writing, and figure generation. The authors take full responsibility for the correctness and scientific value of the work. The work of N.M. is supported in part by the U.S. Department of Energy under grant No. DEFG02-13ER41976/DE-SC0009913, and the DOE QuantISED program through the theory consortium “Intersections of QIS and Theoretical Particle Physics” at Fermilab (FNAL 20-17) under contract No. 89243024CSC000002. I.L. is supported
in part by the U.S. Department of Energy under contracts DE-AC02-06CH11357 (Argonne), DE-SC0023522 (Northwestern), DE-SC0010143 (Northwestern), and No. 89243024CSC000002 (QuantISED Program). 
\end{acknowledgments}

\bibliography{ref}% Produces the bibliography via BibTeX.

\end{document}